\newcommand{\C}{\mathbf{C}}
\newcommand{\ket}[1]{|{#1}\rangle}
\newcommand{\nix}[1]{}
\DeclareMathOperator{\Tr}{tr}
\DeclareMathOperator{\trace}{Tr}
\DeclareMathOperator{\wt}{wt}
\newtheorem{proposition}{Proposition}
\newtheorem{lemma}[proposition]{Lemma}
\begin{document}
\title{Hybrid Codes}
\author{\IEEEauthorblockN{Andrew Nemec}
\IEEEauthorblockA{Texas A\&M University\\
Department of Computer Science and Engineering\\ 
College Station, TX 77845-3112\\
Email: nemeca@cse.tamu.edu}
\and
\IEEEauthorblockN{Andreas Klappenecker}
\IEEEauthorblockA{Texas A\&M University\\
Department of Computer Science and Engineering\\ 
College Station, TX 77845-3112\\
Email: klappi@cse.tamu.edu}}
\maketitle

\begin{abstract}
  A hybrid code can simultaneously encode classical and quantum
  information into quantum digits such that the information is
  protected against errors when transmitted through a quantum channel.
  It is shown that a hybrid code has the remarkable feature that it
  can detect more errors than a comparable quantum code that is able
  to encode the classical and quantum information. Weight enumerators
  are introduced for hybrid codes that allow to characterize the
  minimum distance of hybrid codes. Surprisingly, the weight
  enumerators for hybrid codes do not obey the usual MacWilliams identity.
\end{abstract}

\section{Introduction}
A hybrid code can simultaneously encode classical and quantum
information into quantum digits such that the information is protected
against errors when transmitted through a quantum channel. We will
show that hybrid codes have the remarkable feature that they can
always detect more errors than quantum error detecting codes. So
hybrid codes are in general preferable to quantum error detecting
codes for the simultaneous transmission of classical and quantum
information over a quantum channel.

In their seminal paper~\cite{devetak2005}, Devetak and Shor
characterized the set of admissible rate pairs for simultaneous
transmission of classical and quantum information over a given quantum
channel. They showed that time-sharing a quantum channel for the
separate encoding of quantum and classical information is inferior to
simultaneous transmission. This line of research was extended in
various directions. For instance, Hsieh and Wilde~\cite{hsieh10}
considered the problem of simultaneous transmission of classical and
quantum information over an entanglement-assisted quantum channel.
Yard, Hayden and Devetak \cite{yard2008} considered multi-access
channels with two senders and one receiver to communicate both
classical and quantum information to the receiver. 
There are more papers in quantum information theory
about the simultaneous transmission of classical and quantum
information, but the small selection that we have mentioned should
convey the flavor of this line of research. 

We need codes to transmit classical and quantum information over a
quantum channel. Of course, we can always use a quantum
error-correcting code for this purpose, and simply encode the
classical information in some quantum bits. However, this fails to
take advantage of gains promised by quantum information theorists.
Surprisingly, the foundations of hybrid code have not yet been well
developed. We are aware of a few notable exceptions.  Kremsky, Hsieh,
and Brun investigated early on entanglement-assisted hybrid stabilizer
codes~\cite{kremsky}.  Beny, Kempf, and Kribs briefly sketched an
operator-theoretic construction of hybrid codes~\cite{beny2007}, an
approach that has much potential.  More recently, Grassl, Lu, and Zeng
\cite{grassl} gave a number of hybrid code constructions, derived
linear programming bounds for hybrid stabilizer codes, and found very
remarkable examples of hybrid codes with good parameters.

In the next section, we define the notion of detectable errors of a
hybrid code. We show that hybrid codes can detect more errors than
comparable quantum codes. In Section~\ref{sec:weight}, we introduce
weight enumerators for hybrid codes. As in the case of quantum codes,
we have two weight enumerators. For one of the weight enumerators, we
use the average of the Shor-Laflamme weight enumerators for the
quantum codes that encode in the quantum information. We show that the
two weight enumerators allow us to characterize the errors that can be
detected and corrected by the hybrid code. In
Section~\ref{sec:macwilliams}, we show the unexpected result that
weight enumerators of a hybrid code do not satisfy the MacWilliams
identity, but rather a relaxed version of the MacWilliams identity.

\section{Hybrid Codes}\label{sec:hybrid}
Suppose that we want to simultaneously transmit classical and quantum
messages. Our goal will be to encode them into the state of $n$
quantum digits that have $q$-levels each, so that the encoded message
can be transmitted over a quantum channel. In other words, an encoded
message is a unit vector in the Hilbert space
$$ H = \bigotimes_{k=1}^n \C^{q} \cong \C^{q^n}.$$

A hybrid code has the parameters $((n,K\!:\!M))_q$ if and only if it
can simultaneously encode one of $M$ different classical messages and
a superposition of $K$ orthogonal quantum states into $n$ quantum
digits with $q$ levels. We can understand the hybrid code as a
collection of $M$ orthogonal $K$-dimensional quantum codes $C_m$ that
are indexed by the classical messages
$m\in [M] := \{1, 2, \ldots, M\}$.
If we want to transmit a classical message $m\in [M]$ and a quantum
state $\varphi$, then we need to encode $\varphi$ into the quantum
code $C_m$. 

The encoded states will be subject to errors when transmitted through
a quantum channel.  Our first task will be to characterize the errors
that can be detected by the hybrid code. We will set up a projective
measurement that either upon receipt of a state $\ket{\psi}$ in $H$
either (a) returns $\epsilon$ to indicate that an error happened or
(b) or claims that there is no error and returns a classical message
$m$ and a projection of $\ket{\psi}$ onto $C_m$.

Let $P_m$ denote the orthogonal projector onto the quantum code
$C_m$ for all integers $m$ in the range $1\le m\le M$.  For distinct
integers $a$ and $b$ in the range $1\le a,b\le M$, the quantum codes
$C_a$ and $C_b$ are orthogonal, so $P_bP_a=0$. It follows that the orthogonal
projector onto $C =\bigoplus_{m=1}^M C_m$ is given by 
$$P=P_1+P_2+\cdots+P_M.$$ 
We define the orthogonal 
projection onto $C^\perp$ by $P_\epsilon = 1-P$. 

For the hybrid code $\{ C_m \mid m\in [M]\}$, we can define a
projective measurement $\mathcal{P}$ that corresponds to the set
$$ \{ P_1, P_2, \ldots, P_M, P_\epsilon\}$$ of projection operators
that partition unity.

We can now define the concept of a detectable error.  An error $E$ is
called \textit{detectable}\/ by the hybrid code
$\{ C_m \mid m\in [M]\}$ if and only if for each index $a, b$ in the
range $1\le a, b \le M$, we have
$$ 
P_b E P_a = \begin{cases}
\lambda_{E,a} P_a & \text{if $a=b$}, \\
0 & \text{if $a\neq b$}
\end{cases}
$$
for some scalar $\lambda_{E,a}$. 

The motivation for calling an error $E$ detectable is the following
simple protocol. Suppose that we encode a classical message $m$ and a
quantum state into a state $v_m$ of $C_m$, and transmit it
through a quantum channel that imparts the error $E$. If the error is
detectable, then measurement
of the state $Ev_m = EP_mv_m$
with the projective measurement $\mathcal{P}$  either 
\begin{compactenum}[(E1)]
\item returns $\epsilon$, which signals that an error happened, or 
\item returns $m$ and corrects the error by projecting the state back
  onto a scalar multiple $\lambda_{E,m} v_m = P_mEP_mv_m$ of the state
  $v_m$.
\end{compactenum}
The definition of a detectable error ensures that the measurement
$\mathcal{P}$ will never return an incorrect classical message $d$,
since $P_d E P_m v_m= 0$ for all $d\neq m$, so the probability of
detecting an incorrect message is zero. An error that is not
detectable by the hybrid code can change the encoded classical information, the
encoded quantum information, or both.

The next proposition shows that hybrid codes can always detect more
errors than a comparable quantum code that encodes both classical and
quantum information. This is remarkable given that the advantages are
much less apparent when one considers minimum distance,
see~\cite{grassl}. 

Let $B(H)$ denote the set of linear operators on $H$.

\begin{proposition}
The subset $\mathcal{D}$ of detectable errors in $B(H)$ of an $((n,
K\!:\!M))_q$  hybrid
code form a vector space of dimension
$$\dim \mathcal{D} = q^{2n} - (MK)^2 + M.$$
In particular, an $((n, K\!\mathop{:}\!M))_q$ hybrid code with $M>1$
can detect more errors than an $((n, KM))_q$ quantum code.
\end{proposition}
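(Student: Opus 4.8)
The plan is to treat detectability as a system of linear constraints and to compute $\dim\mathcal{D}$ by a block decomposition of $B(H)$. First I would record that $\dim B(H) = (q^n)^2 = q^{2n}$, since $H\cong\C^{q^n}$, and check that $\mathcal{D}$ really is a linear subspace: if $E$ and $E'$ are detectable with scalars $\lambda_{E,a}$ and $\lambda_{E',a}$, then for any scalar $\mu$ the operator $\mu E+E'$ satisfies $P_b(\mu E+E')P_a=0$ for $a\neq b$ and $P_a(\mu E+E')P_a=(\mu\lambda_{E,a}+\lambda_{E',a})P_a$, so it is again detectable. Thus everything reduces to a dimension count.

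To carry this out I would use the orthogonal splitting $H=C_1\oplus\cdots\oplus C_M\oplus C^\perp$ (with $P_\epsilon$ the projector onto $C^\perp$) and write each $E\in B(H)$ as an array of blocks $P_YEP_X\in\mathrm{Hom}(X,Y)$ indexed by the summands $X,Y\in\{C_1,\dots,C_M,C^\perp\}$, so that $B(H)=\bigoplus_{X,Y}\mathrm{Hom}(X,Y)$. The key observation is that the detectability conditions constrain only those blocks with both $X,Y\in\{C_1,\dots,C_M\}$, leaving every block that touches $C^\perp$ completely free.

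Next I would count block by block. The blocks touching $C^\perp$ account for all of $B(H)$ except the $\mathrm{Hom}(C_a,C_b)$ with $a,b\in[M]$; since $\dim\mathrm{Hom}(C_a,C_b)=K^2$ and $\sum_{a,b}K^2=(MK)^2$, these free blocks contribute $q^{2n}-(MK)^2$ dimensions. Among the constrained blocks, each off-diagonal block ($a\neq b$) is forced to $0$ and contributes nothing, while each diagonal block is forced into the one-dimensional space $\C\,P_a$ of scalar multiples of the identity on $C_a$, so the $M$ diagonal blocks contribute $M$ dimensions. Adding these gives $\dim\mathcal{D}=\bigl(q^{2n}-(MK)^2\bigr)+M$, as claimed.

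Finally, for the comparison I would take the single $MK$-dimensional code $C=\bigoplus_{m=1}^M C_m$ as the reference $((n,KM))_q$ quantum code, whose detectable errors are exactly the $E$ with $PEP=\lambda_E P$. Multiplying this identity on the left by $P_b$ and on the right by $P_a$ and using $P_bP_a=\delta_{ab}P_a$ yields $P_bEP_a=\lambda_E\delta_{ab}P_a$, which is precisely the hybrid condition with $\lambda_{E,a}=\lambda_E$ for every $a$; hence the quantum code's detectable space embeds in $\mathcal{D}$, and the same block count (now with one code block) shows it has dimension $q^{2n}-(KM)^2+1$. Since $M>1$, this is strictly smaller than $\dim\mathcal{D}$, so the hybrid code detects strictly more errors. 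I expect the only real care to lie in the bookkeeping of the diagonal blocks — recognizing that the constraint $P_aEP_a=\lambda_{E,a}P_a$ collapses a $K^2$-dimensional block to a single dimension rather than to zero — while the remainder is a routine dimension count.
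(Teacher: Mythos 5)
Your proof is correct and follows essentially the same route as the paper's: both choose a basis adapted to the orthogonal decomposition $H = C_1\oplus\cdots\oplus C_M\oplus C^\perp$, observe that detectability constrains only the blocks between the code summands (each diagonal block collapses to the one-dimensional space of scalar multiples of the identity, each off-diagonal block to zero) while every block touching $C^\perp$ remains free, and then count dimensions to get $q^{2n}-(MK)^2+M$ versus $q^{2n}-(KM)^2+1$. Your extra observation that the detectable errors of the quantum code $C=\bigoplus_{m}C_m$ actually embed into $\mathcal{D}$ is a nice strengthening of the final comparison that the paper establishes by dimension count alone.
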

\begin{proof}
It is clear that any linear combination of detectable errors is
detectable. If we choose a basis adapted to the orthogonal decomposition
$H= C \oplus C^\perp$ with 
$$C=C_1\oplus C_2\oplus \cdots \oplus C_M,$$
then an error $E$ is represented by a matrix of the form 
$$ 
\left(\begin{array}{cc}
A & R \\ 
S & T
\end{array}\right)
$$
Since $E$ is detectable, the $MK\times MK$ matrix $A$ must satisfy 
$$A = \lambda_{E,1} 1_K \oplus \lambda_{E,2} 1_K \oplus \cdots \oplus
\lambda_{E,M} 1_K,$$
where $1_K$ denote a $K\times K$ identity matrix, but $R$, $S$, and
$T$ can be arbitrary. Therefore, the dimension of the vector space of
detectable errors is given by $q^{2n} - (MK)^2 + M$. The vector space
of detectable errors of an $((n, KM))_q$ quantum code has dimension 
$q^{2n}-(KM)^2+1$, which is strictly less than $q^{2n} - (MK)^2 + M$.
\end{proof}

\newcommand{\one}{\mathbf{1}}
We conclude this section with a few remarks on sets of detectable and
correctable errors. Detectable errors have many nice features. The set
$\mathcal{D}$ of all detectable errors of a hybrid code is a vector
space that contains the identity operator, is closed under taking
adjoints $*$, and is a closed subspace of $B(H)$. Therefore, the set
$\mathcal{D}$ of detectable errors is an operator system of the the
$C^*$-algebra $B(H)$. This means that we can express every detectable
error in $\mathcal{D}$ as a linear combination of detectable errors
that are positive operators. Indeed, an operator $E$ in $\mathcal{D}$
can be expressed as linear combination $E=A+iB$, where
$A=\frac{1}{2}(E+E^*)$ and $B=\frac{i}{2}(E^*-E)$ are self-adjoint
operators in $\mathcal{D}$. A self-adjoint operator $X$ in
$\mathcal{D}$ can be expressed as the difference of the positive
operators $\|X\|\one$ and $\|X\|\one-X$. In short, the set of
detectable errors of a hybrid code has a quite well-behaved structure. 

On the other hand, whenever we consider the correctability of errors,
we must consider an entire set of errors rather than a single
error. Depending on the set of errors that we would like to correct,
a given error operator $E$ might or might not be correctable. 
It is not difficult to show that a unital set $\mathcal{E}$ of errors
is correctable if and only if the set 
$$\mathcal{E}^*\mathcal{E} = \{ F^*E \mid E, F \in \mathcal{E}\}$$
of errors is detectable. In other words, all errors $E, F\in
\mathcal{E}$ must satisfy 
$$ P_b F^*EP_a = \lambda_{F^*E,a}\, [a=b]\, P_a$$
for all $a, b \in [M]$, where $[a=b]$ denotes the Iverson-Knuth
bracket that is equal to 1 when the condition $a=b$ is satisfied and
$0$ otherwise. 

In the next section, we will introduce the notion of a weight of
errors and introduce weight enumerators of hybrid codes.

\section{Weight Enumerators}\label{sec:weight}
In this section, we define weight enumerators for an $((n,
K\mathop{:} M)_q$  hybrid code 
$$ \mathcal{H} = \{ C_m \mid m\in [M]\}.$$
Before we can define the weight enumerators, we will briefly recall
the concept of a nice error basis (see~\cite{knill96a,klappenecker038,klappenecker034} for further details), so that we can define a suitable
notion of weight for the errors. 

Let ${G}$ be a group of order $q^2$ with identity element~1.  A
\textit{nice error basis}\/ on $\C^q$ is a set ${\cal E}=\{
\rho(g)\in {\cal U}(q) \,|\, g\in {G}\}$ of unitary matrices such that
\begin{tabbing}
i)\= (iiiii) \= \kill
\>(i)   \> $\rho(1)$ is the identity matrix,\\[1ex]
\>(ii)  \> $\trace\rho(g)=0$ for all $g\in G\setminus \{1\}$,\\[1ex]
\>(iii) \> $\rho(g)\rho(h)=\omega(g,h)\,\rho(gh)$ for all $g,h\in
{G}$,
\end{tabbing}
where $\omega(g,h)$ is a nonzero complex number depending on $(g,h)\in
G\times G$;  the function
$\omega\colon G\times G\rightarrow \C^\times$ is called the factor
system of $\rho$.
We call $G$ the \textit{index group}\/ of the error basis
${\cal E}$.  The nice error basis that we have introduced so far
generalizes the Pauli basis to systems with $q\ge 2$ levels. 

We can obtain a nice error basis $\mathcal{E}_n$ on $H\cong \C^{q^n}$ by
tensoring $n$ elements of $\mathcal{E}$, so 
$$ \mathcal{E}_n = \mathcal{E}^{\otimes n} = \{ E_1 \otimes E_2\otimes
\cdots \otimes E_n \mid E_k \in \mathcal{E}, 1\le k\le n\}.$$
The weight of an element in $\mathcal{E}_n$ are the number of
non-identity tensor components. We write $\wt(E)=d$ to denote that the 
element $E$ in $\mathcal{E}_n$ has weight $d$. 

We can associate with a hybrid code $\mathcal{H}$ two weight
enumerators 
$$A\!\left(z\right)=\sum_{d=0}^{n}A_{d}z^{d}\text{ and
}B\!\left(z\right)=\sum_{d=0}^{n}B_{d}z^{d},$$
where the coefficients are given by 
$$A_{d}=\frac{1}{K^{2}M}\sum_{a,b=1}^M\sum_{\substack{E\in \mathcal{E}_n\\ \wt(E)=d}}
\left|\Tr\!\left(P_b EP_a\right)\right|^2$$
and
$$B_d=\frac{1}{K^2M}\sum_{a,b=1}^{M}\sum_{\substack{E\in \mathcal{E}_n\\ \wt(E)=d}}
\Tr\!\left((P_bEP_a) (P_bEP_a)^*\right) \Tr(P_a).$$
We note that both sums can be considerably simplified, but we leave
them in the current form for now, since that simplifies the proof of
the next proposition. We call $(A_0, A_1, \ldots, A_n)$ and
$(B_0, B_1, \ldots, B_n)$ the weight distributions of the hybrid code
$\mathcal{H}$.

There is only one element in $\mathcal{E}_n$ of weight 0, namely the
identity matrix. The normalization constants are chosen such that $A_0=B_0=1$. 

\begin{proposition}\label{p:weights}
  Let $\mathcal{H}$ be a $((n, K\mathop{:} M))_q$ hybrid code with
  weight distributions $A_{d}$ and $B_{d}$. Then the weight
  distributions satisfy the following properties. 
\begin{compactenum}[(a)] 
\item The inequality
  $B_{d}\geq A_{d}\geq0$ holds for all integers $d$ in the range
  $0\le d\le n$.  
\item We have $A_d=B_d$ if and only if $\mathcal{H}$ can
  detect all errors in $\mathcal{E}_n$ of weight $d$. 
\end{compactenum}
\end{proposition}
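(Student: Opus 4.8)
The plan is to view both coefficients as sums, over index pairs $(a,b)$ and weight-$d$ errors $E$, of quantities controlled by a single Cauchy--Schwarz inequality in the Hilbert--Schmidt inner product $\langle X,Y\rangle=\Tr(X^*Y)$. Write $F=P_bEP_a$. The summand of $A_d$ is $|\Tr(F)|^2$, and since $\Tr(P_a)=K$ the summand of $B_d$ is $K\,\Tr(FF^*)$. A useful first reduction is to note that the off-diagonal contributions to $A_d$ vanish: by cyclicity $\Tr(P_bEP_a)=\Tr(P_aP_bE)=0$ whenever $a\neq b$, because $P_aP_b=0$. Hence only the blocks with $a=b$ survive in $A_d$, whereas $B_d$ retains every block.

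For part (a), the operator $P_aEP_a$ is supported on the $K$-dimensional space $C_a$; let $G$ be the $K\times K$ matrix representing it there. Applying Cauchy--Schwarz to $G$ and the identity gives
$$|\Tr(P_aEP_a)|^2 = |\Tr(G)|^2 \le K\,\Tr(GG^*) = K\,\Tr\!\big((P_aEP_a)(P_aEP_a)^*\big).$$
Summing over $a$ and over all $E$ of weight $d$ reproduces $K^2M\,A_d$ on the left. On the right I then restore the omitted off-diagonal terms $K\,\Tr((P_bEP_a)(P_bEP_a)^*)\ge 0$ with $a\neq b$, obtaining exactly $K^2M\,B_d$. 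Dividing by $K^2M$ yields $A_d\le B_d$, and $A_d\ge 0$ is immediate since it is a sum of squared moduli.

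For part (b), I would track the two sources of slack in the chain above. Equality $A_d=B_d$ forces (i) every restored off-diagonal term to vanish, i.e.\ $P_bEP_a=0$ for all $a\neq b$ and all weight-$d$ errors $E$; and (ii) equality in each Cauchy--Schwarz step, which holds exactly when $G$ is a scalar multiple of the identity, that is $P_aEP_a=\lambda_{E,a}P_a$. Taken together, (i) and (ii) are precisely the relations defining detectability of $E$ by $\mathcal{H}$, so $A_d=B_d$ if and only if $\mathcal{H}$ detects every element of $\mathcal{E}_n$ of weight $d$.

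The step I expect to require the most care is matching normalizations: one must verify that restricting $P_aEP_a$ to $C_a$ produces exactly the factor $\Tr(P_a)=K$ appearing in the Cauchy--Schwarz bound, so that it lines up with the $\Tr(P_a)$ already built into the definition of $B_d$, and one must phrase the Cauchy--Schwarz equality condition as proportionality to the projector $P_a$ so that it coincides verbatim with the detectability relation $P_aEP_a=\lambda_{E,a}P_a$. Once these two normalizations are pinned down, both parts follow by routine summation.
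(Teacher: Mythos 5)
Your proposal is correct and follows essentially the same route as the paper: both rest on the Hilbert--Schmidt Cauchy--Schwarz inequality applied to $P_bEP_a$ against the projector $P_a$ (equivalently, against the identity on $C_a$), with equality analysis splitting into the diagonal case ($P_aEP_a=\lambda_{E,a}P_a$) and the off-diagonal case ($P_bEP_a=0$ via $\Tr(FF^*)=0$). The only difference is cosmetic ordering---you discard the vanishing off-diagonal terms of $A_d$ up front by cyclicity, whereas the paper applies Cauchy--Schwarz to all pairs $(a,b)$ and observes the vanishing during the converse argument---and your normalization bookkeeping matches the paper's definitions exactly.
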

\begin{proof}
\begin{compactenum}[(a)]
\item Recall that the Cauchy-Schwarz
  inequality for operators $A, B\in B(H)$ is given by 
 \begin{equation}\label{eq:cauchy}
\left|\Tr\!\left(A^*B\right)\right|^2\leq\Tr\!\left(A^*A\right)\Tr\!\left(B^*B\right)
\end{equation}
and equality holds precisely when $A$ and $B$ are linearly dependent. 

If we apply this inequality to the term $|\Tr(P_bEP_a)|^2$ in $A_d$,
then we find that 
\begin{align*}
\left|\Tr\!\left(P_bEP_a\right)\right|^2&
    =\left|\Tr\!\left( (P_bEP_a) P_a\right)\right|^2\\ 
    & \le \Tr\!\left((P_bEP_a)(P_bEP_a)^*\right)
        \Tr\!(P_a^*P_a)\\
    & =  \Tr\!\left((P_bEP_a)(P_bEP_a)^*\right)
        \Tr\!(P_a)
\end{align*}
Summing over all $a,b\in [M]$ and all error operators $E$ of weight $d$
and normalizing, we obtain $B_d\ge A_d \ge 0$.

\item If $\mathcal{H}$ can detect all errors of weight $d$ in
$\mathcal{E}_n$, then 
$$A_d = \frac{1}{M} \sum_{a=1}^M \sum_{\substack{ E\in \mathcal{E}_n
  \\ \wt(E)=d}}
|\lambda_{E,a}|^2 = B_d.$$

Conversely, if equality $A_d=B_d$ holds, then it follows that for all $a,b \in
[M]$ and every error $E$ in $\mathcal{E}_n$ of weight $d$ the Cauchy-Schwarz
inequality 
\begin{multline}   
\left|\Tr\!\left( (P_bEP_a) P_a\right)\right|^2\\ 
\le  \Tr\!\left((P_bEP_a)(P_bEP_a)^*\right) \Tr\!(P_a^*P_a)
\end{multline}
holds with equality. Therefore, $P_bEP_a$ and
$P_a$ are linearly dependent for all $a, b\in [M]$ and all $E$ with
$\wt(E)=d$. We will distinguish between
\begin{inparaenum}[(i)] \item the diagonal case $a=b$
and \item the off-diagonal case $a\neq b$. 
\end{inparaenum}

\begin{compactenum}[(i)]
\item If $a=b$, then we can deduce that for each $a\in
[M]$ and each error operator $E$ of weight $d$ there exists a scalar
$\lambda_{E,a}$ such that 
$$ P_bEP_a = \lambda_{E,a} P_a.$$

\item If $a\neq b$, then both sides of the inequality are equal to 0, since 
the left-hand side satisfies 
$$\left|\Tr\!\left( (P_bEP_a)\right)\right|^2 =
\left|\Tr\!\left( P_bEP_a P_b\right)\right|^2 = 0.$$ 
On the right-hand side, we have $\Tr(P_a)=K\neq 0$, so we can deduce that
$$\Tr\!\left((P_bEP_a)(P_bEP_a)^*\right)=0.$$ 
Since $\Tr(XX^*)=\|X\|^2=0$
implies that $X=0$, we can conclude that $P_bEP_a=0$. 
\end{compactenum}
In other words, if $A_d=B_d$, then it follows from (i) and (ii) that
every error operator $E$ in $\mathcal{E}_n$ of weight $d$ is
detectable by the hybrid code $\mathcal{H}$. \qedhere
\end{compactenum}
\end{proof}

We can simplify the expressions for the coefficients $A_d$ and $B_d$
of the weight distributions of a hybrid code. The coefficients $A_d$
take a particularly simple form, namely they are equal to the average
of the Shor-Laflamme weights~\cite{shor97} of the quantum codes $C_m$
with $m\in [M]$. 
\begin{lemma}
The weight $A_d$  of an $((n,
K\mathop{:} M))_q$ hybrid code $\mathcal{H}=\{ C_m\mid m\in [M]\}$ is
obtained by averaging the Shor-Laflamme weights $A_d(C_m)$ of the
quantum codes $C_m$. In other words,
$$A_{d}=\frac{1}{K^{2}M}\sum_{a=1}^M\sum_{\substack{E\in \mathcal{E}_n\\ \wt(E)=d}}
\left|\Tr\!\left(P_a E\right)\right|^2$$
for all integers $d$ in the range $0\le d\le n$. 
\end{lemma}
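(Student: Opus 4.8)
The plan is to collapse the double sum over the indices $a,b$ in the definition of $A_d$ into the single sum over $a$ claimed in the lemma, and then to recognize each surviving summand as a Shor-Laflamme weight. The whole argument rests on two elementary properties of the code projectors together with the cyclicity of the trace: the projectors are idempotent, $P_m^2=P_m$, and they are mutually orthogonal, so that $P_aP_b=0$ whenever $a\neq b$ (this follows from $P_bP_a=0$ and the self-adjointness of the projectors).

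First I would eliminate the off-diagonal contributions. For $a\neq b$, cyclicity of the trace gives $\Tr(P_bEP_a)=\Tr(P_aP_bE)$, and orthogonality of the codes forces $P_aP_b=0$, hence $\Tr(P_bEP_a)=0$. Thus every term with $a\neq b$ drops out of the inner double sum $\sum_{a,b}|\Tr(P_bEP_a)|^2$. For the diagonal terms $a=b$, idempotency and cyclicity give $\Tr(P_aEP_a)=\Tr(P_a^2E)=\Tr(P_aE)$. Combining these two observations, the double sum reduces to $\sum_{a=1}^M|\Tr(P_aE)|^2$; summing over all weight-$d$ errors $E$ and dividing by $K^2M$ then yields precisely the expression stated in the lemma.

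Finally, to justify the description as an average of Shor-Laflamme weights, I would recall that the Shor-Laflamme weight of the $K$-dimensional quantum code $C_a$ is $A_d(C_a)=K^{-2}\sum_{\wt(E)=d}|\Tr(P_aE)|^2$. With this the simplified formula reads $A_d=M^{-1}\sum_{a=1}^M A_d(C_a)$, the claimed average. I do not anticipate any genuine obstacle here: the only point requiring care is the vanishing of the cross terms for $a\neq b$, and this is immediate from the algebra of orthogonal projectors; the remainder is bookkeeping and a matching of the normalization constants.
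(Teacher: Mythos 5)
Your proof is correct and follows essentially the same route as the paper: the off-diagonal terms $\Tr(P_bEP_a)$ vanish by orthogonality of the projectors together with cyclicity of the trace, and the diagonal terms reduce via idempotency, $\Tr(P_aEP_a)=\Tr(P_a^2E)=\Tr(P_aE)$. The only cosmetic difference is that the paper cites this vanishing from case (ii) of the proof of the preceding proposition rather than rederiving it, so your self-contained write-up is, if anything, slightly more explicit.
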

\begin{proof}
The proof of the
previous proposition revealed that the off-diagonal terms in 
$$A_{d}=\frac{1}{K^{2}M}\sum_{a,b=1}^M\sum_{\substack{E\in \mathcal{E}_n\\ \wt(E)=d}}
\left|\Tr\!\left(P_b EP_a\right)\right|^2$$
vanish, since $\left|\Tr\!\left(P_b EP_a\right)\right|^2=0$ when
$a\neq b$. The diagonal terms $\left|\Tr\!\left(P_a
    EP_a\right)\right|^2$ are equal to $\left|\Tr\!\left(P_a
    E\right)\right|^2$, which proves the claim. 
\end{proof}

We can also simplify the expression 
$$B_d=\frac{1}{K^2M}\sum_{a,b=1}^{M}\sum_{\substack{E\in \mathcal{E}_n\\ \wt(E)=d}}
\Tr\!\left((P_bEP_a) (P_bEP_a)^*\right) \Tr(P_a),$$
a little bit by simplifying the argument of the first trace and noting
that $\Tr P_a = K$. Then we obtain
$$B_d=\frac{1}{KM}\sum_{a,b=1}^{M}\sum_{\substack{E\in \mathcal{E}_n\\ \wt(E)=d}}
\Tr\!\left(P_bEP_a E^*\right).$$
Unlike in the case of the weights $A_d$, the off-diagonal terms
$\Tr(P_bEP_aE^*)$ of the weight $B_d$ do not necessarily vanish. 

\section{MacWilliams Identities?}\label{sec:macwilliams}
Given that the Shor-Laflamme weights of quantum codes obey the quantum
MacWilliams identities~\cite{shor97}, it is natural to ask whether the
weight enumerators $A(z)$ and $B(z)$ of a hybrid code also satisfy the
MacWilliams identity 
$$ B(z) = \frac{K}{q^n} (1+(q^2-1)z)^nA\left(\frac{1-z}{1+(q^2-1)z}\right)?$$
Since the weight $A_d$ of an $((n, K\mathop{:} M))_q$ hybrid code is
given by the average of the $A$-weights of the quantum codes $C_m$,
it is natural to consider the average of the dual weights 
$$ A_d^\perp =
\frac{1}{KM}\sum_{a=1}^{M}\sum_{\substack{E\in \mathcal{E}_n\\ \wt(E)=d}}
\Tr\!\left(P_aEP_a E^*\right).$$
We can define the weight enumerator 
$$A^\perp(z) = \sum_{d=0}^n A_d^\perp z^d.$$
This weight enumerator captures the diagonal part $A_d^\perp$ of each weight
$B_d$.  By mimicking the proof of Shor and Laflamme~\cite{shor97} for
the MacWilliams identity for quantum codes, it
is possible to show that the average weight enumerators satisfy 
$$ A^\perp(z) = \frac{K}{q^n} (1+(q^2-1)z)^nA\left(\frac{1-z}{1+(q^2-1)z}\right).$$

If we define the off-diagonal weights 
$$ C_d = \frac{1}{KM}\sum_{\substack{a,b=1\\ a\neq b}}^{M}
\sum_{\substack{E\in \mathcal{E}_n\\ \wt(E)=d}} \Tr\!\left(P_bEP_a
  E^*\right)$$
and the corresponding weight enumerator
$$ C(z) = \sum_{d=0}^n C_d^\perp z^d,$$
then we can express the weight enumerator $B(z)$ in the form 
$$ B(z) = A^\perp(z) + C(z).$$
The coefficients of $C(z)$ satisfy $C_d\ge 0$. By Proposition~\ref{p:weights}, we have $C_d=0$ when all errors of weight $d$ are detectable by the hybrid code. 

In terms of $A(z)$, the weight enumerator $B(z)$ is given by 
$$ B(z) = \frac{K}{q^n} (1+(q^2-1)z)^nA\left(\frac{1-z}{1+(q^2-1)z}\right) + 
C(z). 
$$
Thus, the usual MacWilliams identity does not hold for hybrid codes,
but a relaxed version does.  

\section{Conclusions}\label{sec:conclusions} 
Many protocols in quantum communication require the transmission of
both classical and quantum information. Devetak and Shor showed in
\cite{devetak2005} that a time-sharing approach for the transmission
of classical and quantum information is in general inferior to a
simultaneous transmission.  The question is how to accomplish this
task. We showed that hybrid codes always offer an advantage over a
comparable quantum code, since they allow one to detect more
errors. We introduced weight enumerators for hybrid codes that allow
one to characterize the highest weight of errors that can be detected
by the code.


\end{document}